\documentclass[a4paper,USenglish,11pt]{article}
\usepackage{geometry}
\geometry{margin=1.2in}
\usepackage{graphicx}
\usepackage{etoolbox}
\usepackage{multirow}
\usepackage{booktabs}
\usepackage{graphicx}
\usepackage{bbm}
\usepackage{tabularx,environ}
\usepackage[utf8]{inputenc}
\usepackage{ stmaryrd }
\usepackage{amsthm}
\usepackage{xspace}

\usepackage{booktabs,siunitx}
\usepackage{algorithm}
\usepackage[noend]{algpseudocode}
\usepackage{xcolor,makecell}
\usepackage{tabularx,environ}
\usepackage{xspace}
\usepackage{amsfonts,amsmath}
\usepackage{subcaption}
\usepackage{todonotes}
\usepackage[switch,modulo]{lineno}
\usepackage[numbers]{natbib}
\usepackage{hyperref}
\usepackage{cleveref}

\usepackage{geometry}
\geometry{margin=3.5cm}
\usepackage{tikz}
\usetikzlibrary{arrows}
\usetikzlibrary{calc, fit}
\newtheorem{theorem}{Theorem}
\newtheorem{corollary}{Corollary}

\newcommand{\decprob}[3]{%
  \begin{center}%
    \begin{minipage}{0.9\linewidth}%
      \textsc{#1}\\
      \textbf{Input:} #2\\
      \textbf{Question:} #3
    \end{minipage}%
  \end{center}%
}

\newcommand{\scd}{\textsc{Set Cover with Demands}\xspace}
\newcommand{\scb}{\textsc{Set Cover with Capacities}\xspace}

\title{Finding Small Multi-Demand Set Covers with Ubiquitous Elements and Large 
Sets is Fixed-Parameter Tractable}

\author{
   Niclas Boehmer$^\mathbf{1}$\and Robert Bredereck$^\mathbf{2}$ \and Du{\v
s}an Knop$^\mathbf{3}$ \and
   Junjie
   Luo$^\mathbf{4}$
}
\date{$^1$~Algorithmics and
	Computational Complexity, TU
Berlin, Germany, niclas.boehmer@tu-berlin.de\\
	$^2$~Humboldt-Universität zu Berlin, Germany,
robert.bredereck@hu-berlin.de\\
$^3$~Czech Technical University in Prague, Czech Republic,
dusan.knop@fit.cvut.cz\\
$^4$~Nanyang Technological University, Singapore, junjie.luo@ntu.edu.sg
}

\begin{document}

\maketitle

\begin{abstract}
We study a variant of \textsc{Set Cover} where each element of the
universe has some demand that determines how many times the element
needs to be covered. Moreover, we examine two generalizations of this problem
where a set can be included multiple times and where sets have different prices.
We prove that all three problems are fixed-parameter tractable with
respect to the combined parameter budget, the maximum number of elements
missing in one of the
sets, and the maximum number of sets in which one of the elements does not
occur. Lastly, we point out how our fixed-parameter tractability algorithm can
be applied in the context of bribery for the (collective-decision) group
identification problem.
\end{abstract}

\noindent We consider the following variant of the traditional \textsc{Set
Cover}
problem:
\decprob{Set Cover with Demands}{Universe $U=[n]$, a list of demands
$d_1,\dots , d_n\in [m]$, a family of subsets
$\mathcal{F}=\{F_1,\dots,F_m\}$ over~$U$, and an integer $k\in
\mathbb{N}$.}{Does there exist a subset $\mathcal{S}\subseteq \mathcal{F}$
with $|\mathcal{S}|=k$ such that each element $i\in U$ is included in at least
$d_i$ sets from $\mathcal{S}$?}

We start by introducing some notation.
A family of subsets over some universe~$U$ is called covering system.
Given some covering system~$\mathcal{F}$ and an element $i\in U$, we denote by
$\mathcal{F}(i)$ the subfamily of sets containing~$i$.
Analogously, for a subset of elements $F\subseteq U$, we denote the subfamily
of sets from $\mathcal{F}$ containing any element from $F$ as $\mathcal{F}(F)$, i.e.,
$\mathcal{F}(F)=\bigcup_{i\in F} \mathcal{F}(i)$.

Let $s_{\text{min}}=\min_{i\in m} |F_i|$ be the minimum size of a set from the
covering system and $o_{\text{min}}=\min_{i\in n} |\mathcal{F}(i)|$ the minimum
number of occurrences of an element in sets from the covering system.
Symmetrically, let
$s_{\text{max}}=\max_{i\in m} |F_i|$ be the maximum size of a set from the
covering system and $o_{\text{max}}=\max_{i\in n} |\mathcal{F}(i)|$ the maximum
number of occurrences of an element in sets from the covering system.

We consider \scd parameterized by the number of sets to be selected, the
maximum number of elements missing in a set, and the
maximum number of sets in which an element does not appear. This means
that we develop an algorithm for situations in which the budget is
small, each set contains
almost all elements, and each element is contained in almost all sets.
In the following, we show that \scd parameterized by
$k+(n-s_{\text{min}})+(m-o_{\text{min}})$ is
fixed-parameter tractable.
To this end, we show fixed-parameter tractability of the ``complement'' problem:

\decprob{Set Cover with Capacities}{Universe $U=[n]$, a list of capacities
$c_1,\dots , c_n\in [m]$, a family of subsets
$\mathcal{F}=\{F_1,\dots,F_m\}$ over~$U$, and an integer $k\in
\mathbb{N}$.}{Does there exist a subset $\mathcal{S}\subseteq \mathcal{F}$
with $|\mathcal{S}|=k$ such that each element $i\in U$ is included in at most
$c_i$ sets from $\mathcal{S}$?}

Note that each instance $\mathcal{I}'$ of \scd with parameter $s'_{\text{min}}$
and
$o'_{\text{min}}$ can be easily transformed into an instance $\mathcal{I}$ of
\scb by replacing each set from the covering system by its complement and
setting
$c_i=k-d_i$ for all $i\in U$. By replacing all sets by their complement, it
holds that $s_{\text{max}}=n-s'_{\text{min}}$ and
$o_{\text{max}}=m-o'_{\text{min}}$. Thus, to show our initial claim, it is
enough to show the following theorem.

\begin{theorem} \label{th:1}
 Parameterized by $k+s_{\text{max}}+o_{\text{max}}$, \scb is solvable
in $\mathcal{O}\Big((n+m)\cdot \big(s_{\text{max}}\cdot
o_{\text{max}}\big)^k\Big)$ time.
\end{theorem}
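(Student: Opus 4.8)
The plan is to design a bounded search tree of depth $k$ with branching factor at most $s_{\text{max}}\cdot o_{\text{max}}$. The first step I would take is a reformulation: since the capacity constraints are downward closed (every subfamily of a feasible selection is feasible), a feasible selection of size exactly $k$ exists if and only if some feasible selection of size at least $k$ exists. Hence it suffices to decide whether at least $k$ sets can be selected while respecting all capacities, and I would solve this recursively on states consisting of the remaining covering system together with residual capacities and a residual budget, initialised to $(\mathcal{F},(c_i)_i,k)$.

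The branching rule I have in mind is as follows. At a node with budget $k'>0$, I first discard every ``dead'' set, i.e.\ every set containing an element whose residual capacity is $0$, since such a set can never be added; if the remaining family is empty I return \textsc{no}. Otherwise I pick an arbitrary reference set $R$ and branch over the subfamily $\mathcal{F}(R)$ of sets meeting $R$. For each $F\in\mathcal{F}(R)$ I guess that $F$ belongs to the solution: I add $F$, decrease the residual capacity of every $x\in F$ by one, delete $F$, and recurse with budget $k'-1$. The base case $k'=0$ returns \textsc{yes}.

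Correctness rests on one exchange lemma, which I expect to be the main obstacle: every feasible selection of size $k$ can be assumed to use at least one set of $\mathcal{F}(R)$. I would prove this by cases on a fixed feasible solution $\mathcal{S}$ of size $k$. If $R\in\mathcal{S}$ we are done; if $R\notin\mathcal{S}$ and $\mathcal{S}\cup\{R\}$ is still feasible, then replacing $\mathcal{S}$ by a size-$k$ subset of $\mathcal{S}\cup\{R\}$ that contains $R$ yields a solution using $R\in\mathcal{F}(R)$; and if $\mathcal{S}\cup\{R\}$ is infeasible, then some element $x\in R$ is already saturated in $\mathcal{S}$, so $\mathcal{S}$ contains at least one (in fact exactly $c_x\ge 1$) of the sets of $\mathcal{F}(x)\subseteq\mathcal{F}(R)$. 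Given any such $F\in\mathcal{F}(R)\cap\mathcal{S}$, the family $\mathcal{S}\setminus\{F\}$ is feasible for the residual instance, because each $x\in F$ loses one unit of both its usage and its capacity, which makes the recursion sound and complete.

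For the running time I would argue that the branching factor is $|\mathcal{F}(R)|\le\sum_{x\in R}|\mathcal{F}(x)|\le s_{\text{max}}\cdot o_{\text{max}}$ and the recursion depth is $k$, so the tree has at most $(s_{\text{max}}\cdot o_{\text{max}})^k$ nodes; since pruning dead sets, choosing $R$, forming $\mathcal{F}(R)$, and updating capacities can each be done in $\mathcal{O}(n+m)$ time per node, the total running time is $\mathcal{O}\big((n+m)\cdot(s_{\text{max}}\cdot o_{\text{max}})^k\big)$. The delicate points I would watch are the saturation step in the infeasible case (where $c_x\ge 1$ holds precisely because dead sets were pruned) and the verification that the guessed $F$ can be removed from $\mathcal{S}$ without breaking feasibility of the residual instance.
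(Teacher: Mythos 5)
Your proposal is essentially the paper's proof: the same depth-$k$ search tree that prunes every set containing a saturated element, picks an arbitrary surviving reference set, and branches over the sets meeting it, justified by the same exchange argument (if a solution uses no set meeting $R$, then every element of $R$ has slack and $R$ can be swapped in). Your residual-capacity bookkeeping and the three-case phrasing of the exchange lemma are cosmetic variants of the paper's formulation; the only behavioral difference is that your branch list $\mathcal{F}(R)$ omits $R$ itself when $R=\emptyset$ (an always-addable set), so you should branch over $\{R\}\cup\mathcal{F}(R)$, as the paper does, to cover that degenerate case.
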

\begin{proof}
    The algorithm for this problem follows a branch-and-bound approach and
constructs a solution $\mathcal{S}$ by adding a set to the solution at each
level. To do so,
the algorithm keeps track of the elements $U'$ that still have free capacity
left and the collection of sets $\mathcal{F}'$ where all elements in the set
have still free capacity. These are the sets that can still be added to
$\mathcal{S}$. At each level, we select one set $F^*$ from
$\mathcal{F}'$ and branch over adding to $\mathcal{S}$ this set or a set that
overlaps with $F^*$ in at least one element. We will see that if there exists
a solution not using any additional of the sets overlapping
with $F^*$, then there needs to exist a solution containing $F^*$, as it is
possible to replace one of the sets in the solution without $F^*$ by $F^*$
without violating the capacity constraints of any element.

    This reasoning gives rise to the recursive algorithm presented in
 	\Cref{alg2}, which can be called to solve the problem as
 	CalcCover($U$, $\mathcal{F}$, $U'$, $\mathcal{F}'$,
$\emptyset$, $k$, $c_1$, \dots, $c_m$).
 	\begin{algorithm}[tb]
 		\caption{CalcCover($U$, $\mathcal{F}$, $U'$,
$\mathcal{F}'$, $\mathcal{S}$, $k$, $c_1$, \dots, $c_m$)}
 		\label{alg:algorithm}
 		\begin{algorithmic}[1] 
 		\Require Univserse $U$, covering system $\mathcal{F}$, elements with
free
capacity $U'$, sets being in question for selection $\mathcal{F}'$, constructed
solution $\mathcal{S}$, and budget $k$.
 			\If {$k=0$}
 			\State Return $\mathcal{S}$
 			\EndIf
 			\While {there exists an $i \in U'$ such that
$|\mathcal{S}\cap \mathcal{F}(i)|= c_i$}
 			\State $\mathcal{F}'=\mathcal{F}'\setminus \mathcal{F}(i)$
 			\State $U'=U'\setminus \{i\}$
 			\EndWhile
 			\If {$\mathcal{F}'=\emptyset$}
 			\State Reject
 			\EndIf
 			\State Let $F^*\in \mathcal{F}'$ be an arbitrary set
from $\mathcal{F}'$
 			\For{$F\in \{F^*\}\cup\big(\mathcal{F}' \cap
 			\mathcal{F}(F^*)\big)$}
             \State Return CalcCover($U$, $\mathcal{F}$, $U'$,
$\mathcal{F}'\setminus \{F\}$, $\mathcal{S}\cup \{F\}$, $k-1$, $c_1$, \dots, $c_m$)
             \EndFor
 	 		\end{algorithmic} \label{alg2}
 	\end{algorithm}

     Every solution returned by the algorithm is also an actual
     solution of the problem, as we only add sets where all elements in the
set still have capacity left. Let $Z\subseteq 2^{\mathcal{F}}$ be the
set of all
     solutions. In the following, assuming that $Z\neq \emptyset$, we
     prove that \Cref{alg2} calculates a solution by proving via
     induction that at each depth $i\in[1,k]$ there exists at least one
     branch which has calculated an $i$-subset of a solution from $Z$.

     We start by examining the initial call of the algorithm, i.e. depth $1$.
     Let
     $F^*$ and $\mathcal{F}'$ be the values of the respective
     variables after line
     8.
     We claim that there exists a solution $\mathcal{Z}\in Z$ of which either
     the selected set $F^*$ or at least one set $F\in \mathcal{F}'\cap
    \mathcal{F}(F^*)$ is part of. Let $\mathcal{Z}'\in Z$ be a solution where
    this is not the case. As $F^*\in \mathcal{F}'$, it
    holds that $c_i>0$ for all $i\in F^*$. Moreover, as no set from
$\mathcal{F}\setminus \mathcal{F}'$ can be part of $\mathcal{Z}'$ and neither
$F^*$ nor a set from $\mathcal{F}'\cap
\mathcal{F}(F^*)$ is part of $\mathcal{Z}'$, all elements from $F^*$ are not
covered at all by $\mathcal{Z}'$. Consequently, it is
     possible to replace an arbitrary set in $\mathcal{Z}'$ by $F^*$.

  	As the induction hypothesis, we assume that at depth $i-1$ there exists at
  	least one branch for which $\mathcal{S}$ is a
    subset of
  	a valid solution, i.e., $\exists \mathcal{Z}\in Z: \mathcal{S}\subseteq
\mathcal{Z}$. Let
  	$F^*$, $\mathcal{S}$, and $\mathcal{F}'$ be the values of the respective
variables after line
  	8 of this branch.  We claim that there always exists an $ F\in
  	\{F^*\}\cup\big(\mathcal{F}' \cap \mathcal{F}(F^*)\big)$ and an
    $\mathcal{Z}\in Z$
  	such that $\mathcal{S}\cup \{F\}\subseteq \mathcal{Z}$.  Let
$\mathcal{Z}'\in Z$ with
  	$\mathcal{S}\subseteq \mathcal{Z}'$ be a solution where neither $F^*$ nor
some $F\in \mathcal{F}' \cap \mathcal{F}(F^*)$ is part of. As $F^*\in
\mathcal{F}'$, it
  	holds that $|\mathcal{S}\cap \mathcal{F}(i)|<c_i$ for all $i\in F^*$.
  	Moreover, as none of $ F\in
  	\{F^*\}\cup\big(\mathcal{F}' \cap \mathcal{F}(F^*)\big)$ is part of
$\mathcal{Z}'$ (by our assumption) and none of
$\mathcal{F}\setminus \mathcal{F'}$ can be part of $\mathcal{Z}'\setminus
\mathcal{S}$ (as this would exceed the capacity of at least one element), it
  	holds that
  	$|\mathcal{Z}'\cap \mathcal{F}(i)|<c_i$ for all $i\in F^*$. Consequently,
it is
  	possible to replace an arbitrary set in $\mathcal{Z}'$ by $F^*$.

  	Concerning the running time, note that the depth of the recursion is
bounded by $k$. As each set contains at most $s_{\text{max}}$ elements and each
element is contained in at most $o_{\text{max}}$ different sets, it holds that
the branching factor (in line 9) is bounded by $s_{\text{max}}\cdot
o_{\text{max}}$. Each step only requires linear time which results in an overall
running time of $\mathcal{O}\Big((n\cdot m)\cdot \big(s_{\text{max}}\cdot
o_{\text{max}}\big)^k\Big)$.
\end{proof}
As argued above, from this, fixed-parameter tractability of our original
problem follows:
\begin{corollary}
 Parameterized by $k+(n-s_{\text{min}})+(m-o_{\text{min}})$, \scd is solvable
in $\mathcal{O}\Big((n\cdot m)\cdot \big((n-s_{\text{min}})\cdot
(m-o_{\text{min}})\big)^k\Big)$ time.
\end{corollary}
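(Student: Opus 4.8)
The plan is to invoke the complement reduction set up just before \Cref{th:1} and to verify that it is a solution-preserving many-one reduction, so that \scd inherits the running time of \scb. Given an instance of \scd with universe $U = [n]$, demands $d_1, \dots, d_n$, family $\mathcal{F} = \{F_1, \dots, F_m\}$, and budget $k$, I would first dispose of the degenerate case: if some $d_i > k$, then no selection of $k$ sets can cover element~$i$ often enough, so we reject immediately; hence we may assume $d_i \le k$ for every~$i$, which makes $c_i := k - d_i \ge 0$ well defined. I would then construct the \scb instance by replacing each $F_j$ with its complement $\bar{F}_j = U \setminus F_j$, setting the capacity of element~$i$ to $c_i = k - d_i$, and keeping the same budget~$k$.

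The key step is to establish that the \scd instance is a yes-instance if and only if the constructed \scb instance is, via the index-preserving correspondence $\{F_j : j \in J\} \mapsto \{\bar{F}_j : j \in J\}$ on $k$-element index sets $J \subseteq [m]$. Fix such a $J$ and an element~$i$. Each chosen index $j \in J$ places $i$ either in $F_j$ or in $\bar{F}_j$, but never in both; therefore, writing $a_i$ for the number of chosen original sets containing~$i$, element~$i$ lies in exactly $k - a_i$ of the chosen complemented sets. The demand constraint $a_i \ge d_i$ is thus equivalent to $k - a_i \le k - d_i = c_i$, i.e.\ to the capacity constraint of the complemented instance. Since this equivalence holds simultaneously for all elements and the map is a bijection on $k$-subsets, the two instances are equivalent.

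Finally I would carry out the parameter and running-time bookkeeping. Complementation sends the smallest original set to the largest complemented set and the least-occurring element to the most-occurring one, so $s_{\text{max}} = n - s_{\text{min}}$ and $o_{\text{max}} = m - o_{\text{min}}$ in the constructed instance, while $k$ is unchanged; thus its parameter $k + s_{\text{max}} + o_{\text{max}}$ equals $k + (n - s_{\text{min}}) + (m - o_{\text{min}})$. The reduction itself runs in time linear in the input size. Applying \Cref{th:1} and substituting these values gives the claimed bound $\mathcal{O}\big((n \cdot m) \cdot ((n - s_{\text{min}}) \cdot (m - o_{\text{min}}))^k\big)$. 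I expect the only genuine subtlety to lie in the bijection argument — in being careful that ``at least $d_i$ original sets'' translates \emph{exactly} into ``at most $c_i$ complemented sets'' — together with the boundary handling of demands exceeding the budget; everything else is routine.
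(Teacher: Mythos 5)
Your proposal is correct and follows exactly the route the paper takes: the complement reduction to \scb (replacing each set by its complement and setting $c_i = k - d_i$), the identities $s_{\text{max}} = n - s_{\text{min}}$ and $o_{\text{max}} = m - o_{\text{min}}$, and then an application of \Cref{th:1}. You merely spell out the correctness of the reduction and the $d_i > k$ boundary case in more detail than the paper, which simply asserts the equivalence.
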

In the following, we will explain how the algorithm from above can be adapted
to also work for two variants of the \scd and \scb problems.

However, before we do so, we want to remark that if the only goal is to prove
that the two problems are fixed-parameter tractable with respect to the three
considered parameters and the exact running time of the algorithm is not
important, it is also possible to employ the following simpler algorithm for
\scb.

We start by deleting all elements with zero capacity from the instance and all
sets that contain one of these elements. For the resulting instance, we
distinguish two different cases: In case that the size $m$ of the covering
system $\mathcal{F}$  is smaller than or equal to $k\cdot s_{\text{max}} \cdot
o_{\text{max}}$, we brute force over all $k$-subsets of $\mathcal{F}$ and check
for each of them whether it is a valid solution. Otherwise, it holds that
$m>k\cdot s_{\text{max}} \cdot o_{\text{max}}$. This directly implies that it is
always possible to construct a valid solution as follows. We start by picking
an arbitrary set from $\mathcal{F}$ and include it in the solution
$\mathcal{S}$. Subsequently, we delete all sets from $\mathcal{F}$ in which an
element $i$ occurs for which the number of sets including $i$ in $\mathcal{S}$
is equal to $c_i$. We repeat the two steps from above until $\mathcal{S}$
contains $k$ sets. As each set contains at most $s_{\text{max}}$ elements, in
each step, only the capacity of at most $s_{\text{max}}$ different elements can
get full. Moreover, as each element is contained in at most $o_{\text{max}}$
different sets, in each step, only at most  $s_{\text{max}}\cdot o_{\text{max}}$
sets can get deleted from the covering system $\mathcal{F}$. Since it holds that
$m>k\cdot s_{\text{max}} \cdot
o_{\text{max}}$, it is always possible to construct a solution of size $k$, as
there will always be a set left in $\mathcal{F}$ that one can pick during the
construction of the solution.

Note that this algorithm has a running time of $\mathcal{O}(2^{k\cdot
s_{\text{max}} \cdot
o_{\text{max}}})$. The
approach can be extended if multiplicities are allowed but does not longer work
in the presence of prices.

\subsection*{Introducing Multiplicities}

\noindent So far, we required that each set
from the
covering system can be only included once in the solution. However, it is also
possible to allow that a set is allowed to be included an arbitrary number of
times in the solution arriving at the following adapted versions of our two
problems:
\decprob{Set Cover with Demands (Capacities) and Multiplicities}{Universe
$U=[n]$, a list of demands
	$d_1,\dots , d_n\in [m]$ (capacities
	$c_1,\dots , c_n\in [m]$), a family of subsets
	$\mathcal{F}=\{F_1,\dots,F_m\}$ over~$U$, and an integer~$k\in
	\mathbb{N}$.}{Do there exist integer multiplicities $\ell_1+\dots +
	\ell_m=k$ such that for each element $i\in U$ it holds that $\sum_{j\in
	\mathcal{F}(i)} \ell_j$ is at
	least
	$d_i$ (at most $c_i$)?}
\Cref{alg2} needs to be only slightly adapted to solve these problems.
Firstly,  $\mathcal{S}$ needs to be a multiset of sets and secondly, line 10 of
\Cref{alg2} needs to be modified as follows.
Instead of excluding the selected set $F$ from the collection of sets
$\mathcal{F}'$ that are still in question to be used, $F$ remains in
$\mathcal{F}'$. Thereby, each set can be selected multiple times. In fact, all
arguments from the proof of \Cref{th:1} still apply here. The only small
modification that needs to be made is that in the induction step we assume that
$\mathcal{Z}'\in Z$ with
$\mathcal{S}\subseteq \mathcal{Z}'$ is a solution where neither $F^*$ nor
some $F\in \mathcal{F}' \cap \mathcal{F}(F^*)$ is part of
$\mathcal{Z'}\setminus \mathcal{S}$. This proves the
following corollary:
\begin{corollary}
	Parameterized by $k+s_{\text{max}}+o_{\text{max}}$, \textsc{Set Cover with
	Capacities and Multiplicities} is solvable
	in $\mathcal{O}\Big((n\cdot m)\cdot \big(s_{\text{max}}\cdot
	o_{\text{max}}\big)^k\Big)$ time.\\
	Parameterized by $k+(n-s_{\text{min}})+(m-o_{\text{min}})$, \textsc{Set
	Cover with Demands and Multiplicities} is solvable
	in $\mathcal{O}\Big((n\cdot m)\cdot \big((n-s_{\text{min}})\cdot
	(m-o_{\text{min}})\big)^k\Big)$ time.
\end{corollary}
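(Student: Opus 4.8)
The plan is to treat the two parts separately, deriving the capacity variant directly from the algorithm of \Cref{th:1} and then the demand variant by the same complementation trick used above. For \textsc{Set Cover with Capacities and Multiplicities}, I would reuse \Cref{alg2} verbatim except for the two changes already announced in the text: the partial solution $\mathcal{S}$ becomes a multiset, and in line~10 the chosen set $F$ is kept in $\mathcal{F}'$ rather than deleted, so that it may be reselected at deeper levels (the while-loop condition $|\mathcal{S}\cap\mathcal{F}(i)|=c_i$ is then read with multiplicities). The running-time bound is immediate: the recursion still has depth $k$, the branching set $\{F^*\}\cup(\mathcal{F}'\cap\mathcal{F}(F^*))$ still has size at most $s_{\text{max}}\cdot o_{\text{max}}$ because $F^*$ contains at most $s_{\text{max}}$ elements and each lies in at most $o_{\text{max}}$ sets, and each node still does linear work, yielding $\mathcal{O}\big((n\cdot m)\cdot(s_{\text{max}}\cdot o_{\text{max}})^k\big)$.

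The substance is in re-checking correctness, and here lies the \textbf{main obstacle}. In the multiplicity-free setting the invariant that $\mathcal{S}$ and $\mathcal{F}'$ are disjoint was used implicitly in the replacement argument; once a set may be reselected this fails, so the induction must be phrased via the multiset difference $\mathcal{Z}'\setminus\mathcal{S}$. Concretely, assuming at depth $i-1$ a solution $\mathcal{Z}\in Z$ with $\mathcal{S}\subseteq\mathcal{Z}$ as multisets, I would take $F^*$ and $\mathcal{F}'$ after line~8 and consider a solution $\mathcal{Z}'\supseteq\mathcal{S}$ in which \emph{neither $F^*$ nor any $F\in\mathcal{F}'\cap\mathcal{F}(F^*)$ occurs in $\mathcal{Z}'\setminus\mathcal{S}$}. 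Since $F^*$ survived the while-loop we have $|\mathcal{S}\cap\mathcal{F}(i)|<c_i$ for every $i\in F^*$, and since no set of $\mathcal{F}\setminus\mathcal{F}'$ can appear in $\mathcal{Z}'\setminus\mathcal{S}$ without overflowing some capacity, every such $i$ still satisfies $|\mathcal{Z}'\cap\mathcal{F}(i)|<c_i$. Hence one copy of an arbitrary set of $\mathcal{Z}'\setminus\mathcal{S}$ can be swapped for a copy of $F^*$ without violating any capacity; as the removed copy lies outside $\mathcal{S}$, the result still contains $\mathcal{S}\cup\{F^*\}$ as a sub-multiset, which is exactly the branch selecting $F=F^*$. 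The base case $i=1$ is the same argument with $\mathcal{S}=\emptyset$, and the remainder of the proof of \Cref{th:1} transfers unchanged.

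For \textsc{Set Cover with Demands and Multiplicities} I would invoke the complementation reduction already established: replace every set by its complement in $U$ and set $c_i=k-d_i$. The only point to verify is compatibility with multiplicities, which holds because the total number of selected copies is fixed at $k$, so for each element $i$ the number of chosen complement-sets that omit $i$ equals $k$ minus the number of chosen original-sets containing $i$; thus ``covered at least $d_i$ times'' becomes ``contained at most $c_i=k-d_i$ times'' in the complemented instance. The parameter identities $s_{\text{max}}=n-s'_{\text{min}}$ and $o_{\text{max}}=m-o'_{\text{min}}$ are as before, so substituting into the capacity bound gives the claimed running time $\mathcal{O}\big((n\cdot m)\cdot((n-s_{\text{min}})\cdot(m-o_{\text{min}}))^k\big)$ and completes the corollary.
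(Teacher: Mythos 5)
Your proposal is correct and follows essentially the same route as the paper: the same two modifications to \Cref{alg2} (multiset $\mathcal{S}$, keeping $F$ in $\mathcal{F}'$), the same key adjustment of phrasing the induction via the multiset difference $\mathcal{Z}'\setminus\mathcal{S}$, and the same complementation reduction for the demands variant. Your treatment is in fact slightly more explicit than the paper's (e.g., noting that the swapped-out copy must lie outside $\mathcal{S}$, and verifying the complementation is compatible with multiplicities), but the underlying argument is identical.
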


\subsection*{Introducing Prices}

It is also possible to consider a generalized version of the two
considered
problems where the different sets have different prices:
\decprob{Set Cover with Demands (Capacities) and Prices}{Universe $U=[n]$, list
of demands
	$d_1,\dots , d_n\in [m]$ (capacities $c_1,\dots , c_n\in [m]$), a list of
	prices $p_1, \dots , p_n\in \mathbb{N}$, a family of subsets
	$\mathcal{F}=\{F_1,\dots,F_m\}$ over~$U$, and two integers $k\in
	\mathbb{N}$ and $t\in \mathbb{N}$.}{Does there exist a subset
	$\mathcal{S}\subseteq \mathcal{F}$
	with $|\mathcal{S}|=k$ and $\sum_{F_i \in \mathcal{S}} p_i\leq t$ such that
	each element $i\in U$ is included in at
	least
	$d_i$ (at most $c_i$) sets from~$\mathcal{S}$?}
Note that, in principle, it is also possible to drop the constraint that
exactly $k$ sets need to be selected here. However, the resulting variant with
capacities would become
trivial then. Because of this and to ensure that the two problems can be still
directly
related, we selected the formulation from above.   Nevertheless, our algorithm
for the problems from above is also applicable to different variants of the
problems without $k$, since as a first step it is possible to guess the
value of $k$ which is guaranteed to lie between $1$ and $t$.

\Cref{alg2} can be again slightly adjusted to solve \textsc{Set Cover with
Capacities and Prices}, and, thus, also the problem with demands. To do so, we
need to provide the prices of the sets and the budget $t$ as part of the input
of the algorithm. Moreover, in line 6, we also reject if the price of the
constructed solution $\mathcal{S}$ exceeds~$t$. Lastly, in line 8, we always
pick the set with the lowest price from $\mathcal{F}'$. Again the argumentation
from the proof of \Cref{th:1} still applies here. The only additional
observation one needs is that $F^*$ is always the cheapest set from
$\mathcal{F}'$. Thus, if there exists a solution $\mathcal{Z}'$ with
$\mathcal{S}\subseteq \mathcal{Z}'$  not containing $F^*$ or some set from
$\mathcal{F}' \cap
\mathcal{F}(F^*)$, it is always possible to replace one set from
$\mathcal{Z}'\setminus \mathcal{S}$ by $F^*$, as all sets from
$\mathcal{Z}'\setminus
\mathcal{S}$ need to be part of $\mathcal{F}'$, all elements from $F^*$ need to
have free capacity and $F^*$ is guaranteed to be
not more expensive than all sets from $\mathcal{Z}'\setminus
\mathcal{S}$.
\begin{corollary}
 	Parameterized by $k+s_{\text{max}}+o_{\text{max}}$, \textsc{Set Cover with
 		Capacities and Prices} is solvable
 	in $\mathcal{O}\Big((n\cdot m)\cdot \big(s_{\text{max}}\cdot
 	o_{\text{max}}\big)^k\Big)$ time.\\
 	Parameterized by $k+(n-s_{\text{min}})+(m-o_{\text{min}})$, \textsc{Set
 		Cover with Demands and Prices} is solvable
 	in $\mathcal{O}\Big((n\cdot m)\cdot \big((n-s_{\text{min}})\cdot
 	(m-o_{\text{min}})\big)^k\Big)$ time.
\end{corollary}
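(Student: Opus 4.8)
The plan is to obtain both statements by extending the branch-and-bound algorithm of \Cref{th:1} with price bookkeeping, and then to carry the capacities-with-prices result over to demands-with-prices by the same complementation reduction used for \Cref{th:1} and its corollary. For the capacities variant I would pass the prices $p_1,\dots,p_m$ and the budget $t$ as additional arguments to \Cref{alg2} and make two changes: in the pruning/rejection step (line~6) the algorithm additionally rejects as soon as the total price $\sum_{F\in\mathcal{S}}p(F)$ of the partial solution exceeds $t$ (writing $p(F)$ for the price of $F$), and in the pivot-selection step (line~8) the pivot $F^*$ is chosen as a \emph{cheapest} set of $\mathcal{F}'$ rather than an arbitrary one. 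The branching set $\{F^*\}\cup(\mathcal{F}'\cap\mathcal{F}(F^*))$ is untouched, so the branching factor remains at most $s_{\text{max}}\cdot o_{\text{max}}$, the recursion depth remains $k$, and the extra price test costs only linear time per node; hence the running-time bound of \Cref{th:1} is preserved.

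For correctness I would reuse the induction of \Cref{th:1} almost verbatim, the single new ingredient being that the exchange step must respect the budget $t$ as well as the capacities. Fixing a branch whose partial solution $\mathcal{S}$ satisfies $\mathcal{S}\subseteq\mathcal{Z}'$ for some valid $\mathcal{Z}'\in Z$, suppose $\mathcal{Z}'$ contains none of $\{F^*\}\cup(\mathcal{F}'\cap\mathcal{F}(F^*))$. As in \Cref{th:1}, every set of $\mathcal{Z}'\setminus\mathcal{S}$ must lie in $\mathcal{F}'$ (otherwise some capacity would already be violated) and every element of $F^*$ still has free capacity under $\mathcal{Z}'$. Replacing any $G\in\mathcal{Z}'\setminus\mathcal{S}$ by $F^*$ therefore yields a family of the same size $k$ that still meets every capacity constraint and still contains $\mathcal{S}$. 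Prices enter only through the budget: since $G\in\mathcal{F}'$ and $F^*$ is a cheapest set of $\mathcal{F}'$, we have $p(F^*)\le p(G)$, so the total price does not increase and the budget $t$ is still respected. This exhibits a valid solution containing $\mathcal{S}\cup\{F^*\}$, contradicting the assumption and closing the induction.

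For the demands variant I would invoke the reduction described before \Cref{th:1}: complement every set and set $c_i=k-d_i$, while keeping each set's price and the budget $t$ unchanged. A $k$-subset of the original family is feasible for demands-with-prices exactly when the corresponding $k$-subset of complemented sets is feasible for capacities-with-prices, and matching selections have identical total price; together with $s_{\text{max}}=n-s'_{\text{min}}$ and $o_{\text{max}}=m-o'_{\text{min}}$ this turns the capacities bound into the claimed bound for demands. (The remark preceding the prices problem also lets us handle the variant without a fixed $k$ by first guessing $k\in\{1,\dots,t\}$.)

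The step I expect to be the main obstacle is justifying that the cheapest-pivot rule is precisely what makes the exchange price-safe. In \Cref{th:1} an arbitrary pivot sufficed because only capacities had to be preserved, whereas now the removed set $G$ ranges over all of $\mathcal{Z}'\setminus\mathcal{S}$, so one must verify that \emph{every} candidate for removal lies in $\mathcal{F}'$ and is hence no cheaper than $F^*$; one must also check that this membership argument is not disturbed by the new rejection test in line~6. Everything else is inherited unchanged from \Cref{th:1}.
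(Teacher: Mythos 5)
Your proposal matches the paper's argument essentially verbatim: the same two modifications to \Cref{alg2} (rejecting when the partial solution's price exceeds $t$, and choosing a cheapest set of $\mathcal{F}'$ as the pivot $F^*$), the same key observation that every set of $\mathcal{Z}'\setminus\mathcal{S}$ lies in $\mathcal{F}'$ and is therefore no cheaper than $F^*$, so the exchange preserves the budget, and the same complementation reduction to transfer the result to the demands variant. The proof is correct and takes the same route as the paper.
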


\section*{Application: Bribery in Group Identification}
We now describe an application where the \textsc{Set Cover with Capacities}
problem naturally arises and our FPT algorithm is directly applicable: In group
identification, we are given a set $A=\{a_1,\dots,a_n\}$ of agents and the
task
is to identify a
so-called \emph{socially qualified} subgroup of the agents
\cite{DBLP:journals/LeA/KasherR1997}. To do so, we are
given a \textit{qualification profile}~\mbox{$\varphi \colon A \times A \to
\{-1,1\}$} that denotes for each agent $a$ which of the other agents $a$ deems
qualified, i.e., agent~$a$
\textit{qualifies}~$a'$ if~$\varphi(a,a')=1$ and
\textit{disqualifies}~$a'$ if~$\varphi(a,a')=-1$. For an agent~$a\in A$,
let~$Q_\varphi^+(a)=\{a'\in A\mid \varphi(a',a)=1\}$ denote the set of agents
qualifying~$a$ and $Q_\varphi^-(a)=\{a'\in A\mid \varphi(a',a)=-1\}$
the set
of
agents disqualifying~$a$.
To decide given a set of agents and a qualification profile which agents are
socially qualified, different social rules have been proposed. One popular rule
parameterized by two integers $s$ and $t$ with~$s+t\leq n+2$ is the consent
rule, denoted
\ensuremath{f^{(s,t)}}
\cite{DBLP:journals/jet/SametS03}. Under the consent rule, an agent $a\in A$
with $\varphi(a,a)=1$ is socially qualified if and only if at least $s$ agents
(including $a$ itself) qualify $a$. Similarly, an agent $a\in A$ with
$\varphi(a,a)=-1$ is socially disqualified if and only if at least $t$ agents
(including $a$ itself) disqualify $a$.

Recently, \citet{DBLP:journals/aamas/ErdelyiRY20} initiated the study of the
computational complexity of
bribery in the context of group identification, among others, asking the
following question: 
\decprob{Constructive-\ensuremath{f^{(s,t)}} Agent Bribery}{Set $A$ of agents,
qualification profile $\varphi$, subset $A^+\subseteq A$ of agents to be made
socially qualified, and budget
$\ell$.}{Is it possible to modify the opinion of at most $\ell$ agents such
that
after the modifications all agents from $A^+$ are socially qualified under the
consent rule
\ensuremath{f^{(s,t)}}?}
\citet{DBLP:journals/aamas/ErdelyiRY20} proved that this problem is NP-hard
even for $s=1$ and $t=2$. Subsequently, \citet{DBLP:conf/ijcai/BoehmerBKL20}
conducted a detailed study of the
parameterized complexity of this question considering the parameters $\ell$,
$s$, $t$, and $|A^+|$. Among others, they proved that
\textsc{Constructive-\ensuremath{f^{(s,t)}} Agent Bribery} is W[1]-hard with
respect to $s+\ell$ even if $t=1$.

Let $\Delta$ be the maximum number of
agents  from $A^+$ that an agent $a\in A$ qualifies, i.e., $\Delta:=\max_{a\in
A}
|\{a'\in A^+\mid \varphi(a,a')=1\}|$.
We now prove that our algorithm for \textsc{Set Cover with Demands} can be used
to prove that \textsc{Constructive-\ensuremath{f^{(s,t)}} Agent Bribery} with
$t=1$ is
fixed-parameter tractable with respect to $\Delta+s$.

\begin{theorem}\label{th:2}
\textsc{Constructive-\ensuremath{f^{(s,t)}} Agent Bribery} is solvable in
$\mathcal{O}\big(n^2\cdot (\Delta \cdot
s)^s\big)$ time for $t=1$.
\end{theorem}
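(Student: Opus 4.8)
The plan is to reduce \textsc{Constructive-\ensuremath{f^{(s,t)}} Agent Bribery} with $t=1$ to \scd and invoke the fixed-parameter algorithm established above. First I would unpack what a successful bribery must achieve. Since $t=1$, every agent $a$ with $\varphi(a,a)=-1$ is already socially disqualified (it disqualifies itself), so to make such an $a\in A^+$ socially qualified we are forced to bribe $a$ and flip its self-opinion; write $A^+_{\mathrm{self}}=\{a\in A^+\mid \varphi(a,a)=-1\}$ for these mandatory bribes. For the remaining requirement, note that when we bribe a set $B\subseteq A$ of agents the optimal reassignment is to let every bribed agent qualify all of $A^+$ (and qualify itself). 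Under this choice the qualifiers of $a\in A^+$ become exactly $B\cup Q_\varphi^+(a)$, so $a$ is socially qualified precisely when $a\in B$ or $\varphi(a,a)=1$, and in addition $|B\cup Q_\varphi^+(a)|\ge s$. Rewriting the latter as $|\{b\in B\mid \varphi(b,a)=-1\}|\ge d_a$ with $d_a:=\max(0,s-|Q_\varphi^+(a)|)$ turns the counting condition into a covering condition.

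This yields the reduction. I would build a \scd instance whose universe is $A^+$, with one set $S_b=\{a\in A^+\mid \varphi(b,a)=-1\}$ per agent $b\in A$ and demand $d_a$ for each element $a$; selecting a family of sets corresponds to choosing whom to bribe, and $S_b$ covers exactly those targets that $b$ must be reassigned to qualify. The mandatory bribes $A^+_{\mathrm{self}}$ are pre-selected before calling the algorithm (rejecting immediately if $|A^+_{\mathrm{self}}|>\ell$), which reduces the budget and the demands; since adding bribes never decreases any qualifier count, restricting attention to solutions of a fixed size is without loss of generality. The parameter translation is the crucial point: every agent qualifies at most $\Delta$ members of $A^+$, so each $S_b$ misses at most $\Delta$ elements and hence $n_{SC}-s_{\min}\le \Delta$; and after discarding every already-satisfied element (those with $d_a=0$) the surviving elements have $|Q_\varphi^+(a)|\le s-1$, so each is contained in all but at most $s-1$ of the sets, giving $m_{SC}-o_{\min}\le s$. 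Because the universe has size $|A^+|\le n$ and there are $|A|\le n$ sets, the factor $n_{SC}\cdot m_{SC}$ is $\mathcal{O}(n^2)$.

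The remaining obstacle is that the \scd budget is the number of bribes, which a priori is only bounded by $\ell$, whereas the target running time has exponent $s$. The key observation that removes this gap is that any choice of at least $s$ bribes already satisfies every demand: a selected family of size $k$ can miss a given element $a$ in at most $|Q_\varphi^+(a)|=s-d_a$ of its sets, so $a$ is covered at least $k-(s-d_a)$ times, which is at least $d_a$ as soon as $k\ge s$. Consequently it never helps to bribe more than $\max(s,|A^+_{\mathrm{self}}|)$ agents. I would therefore split into two cases. If $\ell\ge s$, the instance is a yes-instance exactly when $|A^+_{\mathrm{self}}|\le \ell$ (and the trivial necessary condition $|A|\ge s$ holds, since otherwise no agent can reach $s$ qualifiers): bribe $A^+_{\mathrm{self}}$ and pad to $s$ bribes, which fulfils all demands. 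If $\ell<s$, pre-select the mandatory bribes and run the \scd algorithm with the residual budget $\ell-|A^+_{\mathrm{self}}|<s$; by the parameter bounds above this costs $\mathcal{O}\big(n^2\,(\Delta\cdot s)^{s}\big)$ time, as claimed.

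The step I expect to require the most care is the interaction between the forced self-bribes and the covering condition: one must check that pre-selecting $A^+_{\mathrm{self}}$, reducing the demands accordingly, and then deleting satisfied elements preserves both parameter bounds $n_{SC}-s_{\min}\le\Delta$ and $m_{SC}-o_{\min}\le s$ on the residual instance. This goes hand in hand with verifying that the ``any $\ge s$ bribes suffice'' argument correctly caps the budget at $s$, so that the two cases are exhaustive and the exponent is $s$ rather than $\ell$.
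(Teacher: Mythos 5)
Your proposal is correct and follows essentially the same route as the paper's proof: handle the forced bribes of agents in $A^+$ that disqualify themselves, dispose of the case $\ell \ge s$ (for the residual budget) by letting $s$ bribed agents qualify everyone, and otherwise reduce to \textsc{Set Cover with Demands} with universe $A^+$, sets $\{a'\in A^+\mid \varphi(a,a')=-1\}$, and demands $s-|Q_\varphi^+(a)|$, with the same parameter bounds $n-s_{\text{min}}\le\Delta$ and $m-o_{\text{min}}\le s$. The only differences are presentational: you make explicit the ``any $s$ bribes suffice'' observation and the without-loss-of-generality step that bribed agents qualify everyone, both of which the paper leaves implicit.
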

\begin{proof}
	First of all, as $t=1$ implies that every agent that disqualifies itself is
	also socially disqualified, we bribe all agents in $A^+$ who do not qualify
	themselves to
	qualify everyone and adjust the budget $\ell$ accordingly. We delete from
	$A^+$ all agents who are already socially qualified after this bribery,
	while keeping them in the set of agents.

	If we have $\ell\geq s$ for the resulting budget, we are done as we can
	simply pick $\ell$ agents and make them qualify everyone, which results in
	all agents from $A^+$ being socially qualified.

	Consequently, we are left with the situation where $\ell < s$. We now
	reduce the problem to an instance of \textsc{Set Cover with Demands} as
	follows. We set the universe $U=A^+$ and for each $a\in A^+$ its demand to
	$s-|Q^+(a)|$ (the number of additional qualification $a$ needs to get by
	the bribery to become socially qualified). For each agent $a\in A$ who does
	not qualify all other agents, we add
	a set $F_a$ to our covering
	system $\mathcal{F}$ containing all agents from $A^+$ which $a$ does not
	qualify, i.e.,
	$F_a:=\{a'\in A^+\mid \varphi(a,a')=-1\}$. Finally, we set $k:=\ell$.
	Bribing an agent $a\in A$, which results in all agents from $F_a$ getting
	an
	additional qualification, corresponds to including $F_a$ in the cover. It
	is easy to see that there exists a successful bribery if and only if there
	exists a solution to the constructed \textsc{Set Cover with Demands}
	instance. Note that in the constructed instance $k$ is bounded by $s$.
	Moreover, for each $a\in A$, it holds that $n-|F_a|$ is equal to the number
	of agents $a$ qualifies before the bribery and, thus,
	$\Delta\geq |A^+|-\min_{F\in \mathcal{F}}|F|$. Lastly, note that as each
	agent $a\in
	A^+$ can only be approved by at most $s-1$ agents before the bribery, $a$
	needs to appear in
	all but at most $s-1$ sets. Thus, applying the algorithm from \Cref{th:1},
	we can solve the problem in $\mathcal{O}\big(n\cdot (\Delta \cdot
	s)^s\big)$ time.
\end{proof}

It is even possible to extend this result to a fixed-parameter tractable
algorithm for the parameters $s+t+\ell+\Delta$ (note that as proven by
\citet{DBLP:conf/ijcai/BoehmerBKL20}
\textsc{Constructive-\ensuremath{f^{(s,t)}} Agent Bribery} is W[1]-hard with
respect to $s+t+\ell$):
\begin{corollary}
	\textsc{Constructive-\ensuremath{f^{(s,t)}} Agent Bribery} is solvable in
	$\mathcal{O}\Big(n^2\cdot \big( (\ell+t)^\ell \cdot (\Delta \cdot
	s)^s \big)\Big)$ time.
\end{corollary}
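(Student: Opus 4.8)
The plan is to generalize the reduction behind \Cref{th:2}, which only had to cope with qualification, by treating the two ways an agent of $A^+$ can become socially qualified separately. Since the bribery is constructive, I would first argue that it is without loss of generality to assume that every bribed agent qualifies \emph{all} other agents: turning any of a bribed agent's opinions into a qualification can only increase the number of qualifiers and decrease the number of disqualifiers of the agents in $A^+$, so it never hurts. After this normalization a solution is fully described by the set $B$ of bribed agents with $|B|\le\ell$, and an agent $a\in A^+$ is socially qualified exactly when either it qualifies itself after the bribery and has at least $s$ qualifiers, or it still disqualifies itself and has at most $t-1$ disqualifiers, i.e.\ $|Q_\varphi^-(a)\setminus B|\le t-1$. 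The algorithm will handle the at-most-$(t-1)$-disqualifier constraints by a bounded search tree and the at-least-$s$-qualifier constraints by the \scd algorithm of \Cref{th:1}.

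For the disqualification constraints I would branch as follows. First, for every self-disqualifying $a\in A^+$ with $|Q_\varphi^-(a)|>\ell+t-1$, keeping $a$ self-disqualifying would require bribing more than $\ell$ agents of $Q_\varphi^-(a)\setminus\{a\}$, which is impossible; hence such an $a$ must instead be bribed into qualifying itself, a forced move costing one unit of budget. After exhaustively applying this rule, every remaining self-disqualifying agent of $A^+$ has at most $\ell+t-1$ disqualifiers. Now, while some such $a$ still has at least $t$ unbribed disqualifiers, at least one of them must be added to $B$; since $a\in Q_\varphi^-(a)$, bribing $a$ itself—which turns $a$ into a self-qualifying agent handled by the second phase—is one of these options. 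Branching over the at most $\ell+t-1$ candidate disqualifiers and decreasing the budget by one at each step yields a search tree of depth at most $\ell$ and branching factor below $\ell+t$, hence at most $(\ell+t)^{\ell}$ leaves. Completeness follows because any valid $B$ must, for each unsatisfied $a$, contain one of its disqualifiers, so the branch agreeing with $B$ is always explored.

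At each leaf all disqualification constraints are satisfied by the partial bribe set $B'$ constructed so far, and what remains is a pure qualification problem on the self-qualifying agents of $A^+$ (the originally self-qualifying ones together with those bribed to self-qualify): each such $a$ must still receive enough additional qualifiers to reach $s$, counting the agents of $B'$ as qualifiers. This is precisely a \scd instance as in the proof of \Cref{th:2}—universe the relevant agents of $A^+$, one set $F_a$ per not-yet-bribed agent $a$ containing the agents of $A^+$ that $a$ disqualifies, demands $s-|Q_\varphi^+(\cdot)|$ lowered to account for $B'$, and remaining budget $\ell-|B'|$—so it is solvable in $\mathcal{O}(n^{2}\cdot(\Delta\cdot s)^{s})$ time by \Cref{th:1}, using $n-s_{\text{min}}\le\Delta$, $m-o_{\text{min}}\le s-1$, and budget at most $\ell<s$. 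Multiplying the $(\ell+t)^{\ell}$ leaves by this cost gives the claimed $\mathcal{O}\big(n^{2}\cdot((\ell+t)^{\ell}\cdot(\Delta\cdot s)^{s})\big)$ bound.

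The main obstacle I expect is arguing completeness of this \emph{decoupled} search, rather than the running-time bookkeeping. A bribe chosen in the first phase to repair a disqualification simultaneously acts as a qualifier in the second phase, and a bribed agent of $A^+$ is itself free to be left self-disqualifying or made self-qualifying; I must therefore show that exploring only the disqualification-relevant branches and then solving the qualification part exactly never discards a solution. The clean way to do this is to fix an arbitrary solution $B$, show that the first-phase branching can follow $B$'s choices down to a subset $B'\subseteq B$ of disqualification-repairing bribes, and then observe that $B\setminus B'$ is a feasible solution of size at most $\ell-|B'|$ for the resulting \scd instance, so the exact algorithm of \Cref{th:1} returns some—hence a—valid completion.
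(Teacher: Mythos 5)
Your proposal is correct and follows essentially the same route as the paper: normalize so that bribed agents qualify everyone, force-bribe self-disqualifying agents of $A^+$ with at least $\ell+t$ disqualifiers, resolve the remaining self-disqualifying agents by a depth-$\ell$ search tree with branching factor at most $\ell+t$ (bribing the agent itself being one of the branches), and solve the residual self-qualification problem at each leaf via the reduction of \Cref{th:2}. Your per-disqualifier branching and the explicit completeness argument (following a fixed solution $B$ down to $B'$ and completing with $B\setminus B'$) merely make precise what the paper's analysis implicitly assumes.
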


\begin{proof}
	If we bribe an agent, we always make him qualify all agents.
	For all agents~$a \in A^+$ with~$\varphi(a,a)=-1$ and~$|Q^-(a)| \geq
	\ell+t$, $a$
	must be bribed; so we bribe $a$.
	Thus, we can assume~$|Q^-(a)| < \ell+t$ for all~$a \in A^+$
	with~$\varphi(a,a)=-1$.
	Now, as long as there exists an $a \in A^+$
	with~$\varphi(a,a)=-1$, we branch on bribing~$a$ or bribing~$|Q^-(a)|-(t-1)$
	agents
	from~$Q^-(a)$ and update $\varphi$, $\ell$, and $A^+$ accordingly (we
	delete agents
	from $A^+$ if they became socially qualified). We reject the current branch
	if $\ell<0$. For each non-rejected branch, it remains to
	consider agents
	from
	$A^+$ who qualify themselves. This problem is similar to the case
	when~$t=1$ and we can apply Theorem \ref{th:2}.

	As the branching
	factor
	in each step is bounded by $\ell+t$ and the depth is bounded by $\ell$, the
	algorithm from Theorem \ref{th:2} is employed at most $(\ell+t)^\ell$
	times, which results in an overall running time of
	$\mathcal{O}\Big(n^2\cdot \big( (\ell+t)^\ell \cdot (\Delta \cdot
	s)^s \big)\Big)$
\end{proof}

\subsubsection*{Acknowledgments}
NB is supported by the DFG project MaMu  (NI 369/19).
DK is partly supported by the OP VVV MEYS funded project
CZ.02.1.01/0.0/0.0/16\_019/0000765 ``Research Center for Informatics''.

\bibliographystyle{plainnat}

\end{document}